\def\changesHilighted{true}
\newtheorem{example}{Example}
\newtheorem{problem}{Problem}
\newcommand{\norm}[1]{\lVert #1 \rVert}
\newcommand{\abs}[1]{\left| #1 \right|}
\definecolor{dogwoodrose}{rgb}{0.84, 0.09, 0.41}
\definecolor{slateblue}{rgb}{0.42, 0.35, 0.8}
\ifnum\pdfstrcmp{\changesHilighted}{false}=0
\newcommand{\R}{\mathbb{R}}
\newcommand{\C}{\mathcal{C}}
\newtheorem{theorem}{Theorem}
\newtheorem{remark}{Remark}
\newtheorem{assumption}{Assumption}
\newtheorem{definition}{Definition}
\newtheorem{lemma}{Lemma}
\newtheorem{proposition}[theorem]{Proposition}
\def\BibTeX{{\rm B\kern-.05em{\sc i\kern-.025em b}\kern-.08em
    T\kern-.1667em\lower.7ex\hbox{E}\kern-.125emX}}
\begin{document}
\title{Safety Critical Control for Nonlinear Systems with Complex Input Constraints
}
\author{Yaosheng Deng, Masaki Ogura, Yang Bai,  Yujie Wang,
and Mir Feroskhan
\thanks{This work was partially supported by JSPS KAKENHI Grant Number JP21H01352. Yang Bai and Mir Feroskhan are the co-corresponding authors.}
\thanks{Yaosheng Deng and Mir Feroskhan are with the School of Mechanical and Aerospace Engineering, Nanyang Technological University, Singapore, Singapore 639798 (e-mail:
yaosheng001@e.ntu.edu.sg; mir.feroskhan@ntu.edu.sg).}

\thanks{Yang Bai and Masaki Ogura are with the Graduate School of Advanced Science and Engineering, Hiroshima University, Hiroshima, Japan (e-mail: yangbai@hiroshima-u.ac.jp; oguram@hiroshima-u.ac.jp).}
\thanks{Yujie Wang is with the Department of Mechanical Engineering, University
of Wisconsin-Madison, Madison, WI, USA (ywang2835@wisc.edu).}
}
\maketitle

\begin{abstract}
In this paper, we propose a novel Control Barrier Function (CBF) based controller for nonlinear systems with complex, time-varying input constraints. To deal with these constraints, we introduce an auxiliary control input to transform the original system into an augmented one, thus reformulating the constrained-input problem into a constrained-output one. This transformation simplifies the Quadratic Programming (QP) formulation and enhances compatibility with the CBF framework. As a result, the proposed method can systematically address the complex, time-varying, and state-dependent input constraints. The efficacy of the proposed approach is validated using numerical examples.
\end{abstract}

\begin{IEEEkeywords}
Input constraint, control barrier function, quadratic programming.
\end{IEEEkeywords}

\section{Introduction}
In practical control systems, input constraints commonly exist, arising from physical limitations such as actuator saturation, safety requirements, or energy restrictions. These input-constrained systems present unique challenges in control design, as conventional methods may yield infeasible or unsafe commands when such constraints are not explicitly considered~\cite{matsutani2010adaptive, mir,furtat2024nonlinear}. Ensuring system performance, and constraint satisfaction simultaneously requires tailored strategies.

To deal with this challenge, Model Predictive Control (MPC) has been a prominent method for managing constraints~\cite{intro-mpc}. Several studies have investigated its application in handling input constraints in nonlinear systems. 
However, nonlinear MPC requires solving a nonlinear programming problem, which is not always feasible for online applications due to the limitations of QP solvers in low-dimensional parameter spaces~\cite{soloperto2022nonlinear, ferreau2008online}. Alternatively, the reference governor (RG) approach~\cite{garone2015explicit} integrates input constraints into a well-designed nominal controller using QP. Despite its effectiveness, RG necessitates the computation of admissible sets, complicating its implementation~\cite{liu2018decoupled}. Barrier Lyapunov Function (BLF) based approaches have also been widely adopted to manage constraints in various nonlinear systems. For instance, BLF-based controllers have been proposed for systems with input saturation~\cite{li2020barrier, mousavi2023barrier}. However, BLFs primarily address time-varying constraints and often overlook the more complex scenario of state-dependent constraints. This focus on time-based constraints limits their applicability in systems where the state and environment can change unpredictably~\cite{blf-bad-1, li2024composite}. Furthermore, BLF methods typically require the reference trajectory to remain within the constraint set, adding complexity to the design process and potentially restricting system performance~\cite{blf-bad-2, blf-bad-3}.

Motivated by these limitations, recent studies have explored CBF as an alternative framework to systematically handle system constraints~\cite{ames2020integral,cortez2022safe,taylor2020adaptive}. In CBF-based approaches, two conditions are used to enforce output constraints. The first is the CBF condition, which ensures that the safe set remains invariant, and the second is the Control Lyapunov Functions (CLF) condition, which guarantees stability. 
For control affine system, these conditions are affine constraints. As a result, they can be combined into a single convex optimization problem that is solved via quadratic programming (QP), unlike other methods that are used for non-affine problems~\cite{dacs2025robust}. This CLF-CBF-QP framework yields globally optimal solutions~\cite{ames2016control, ames2019control}.

However, the application of CBF-based designs to systems with input constraints is limited. One method to address this is through integral control barrier functions (ICBFs)\cite{ames2020integral}. While promising, further theoretical investigation is needed to establish the feasibility issue of ICBF-based controllers\cite{cortez2022safe}, as highlighted in [Rem 4, 16]~\cite{ames2020integral}.
Another approach involves incorporating input saturation directly into the QP formulation. 
In~\cite{intro-multicbf-4}, input constraints are defined as one of the multiple CBF conditions in the QP formulation. Although this approach has been successful in certain specific models, introducing multiple constraints in the QP could potentially lead to infeasibility issues~\cite{intro-cbf-ic-1}. 
To address these challenges, several studies have proposed methods
based on specific assumptions. For example, in~\cite{intro-multicbf-2}, the authors assume that the safety regions of multiple CBFs do not conflict, which allows each CBF to be treated independently. However, this assumption is often unrealistic in practical scenarios.
In~\cite{intro-multicbf-3}, a multiple CBF-based approach for robot navigation is proposed, but it relies on a specifically structured environment. These assumptions can simplify the problem but do not fully resolve the underlying challenges of handling input constraints with CBFs. Consequently, managing input constraints in CBF-based control designs remains a complex and unresolved issue.

In this research, we propose a novel CBF-based scheme for input-constrained nonlinear systems, where constraint boundaries are related to both state and time. Instead of incorporating the input constraint directly into the QP formulation, we transform the constrained-input problem into a constrained-output one. 
This transformation aligns with the solid CBF framework~\cite{ames2019control} thereby simplifying the QP formulation and relaxing the non‑conflict assumptions required by previous CBF approaches (e.g.,~\cite{intro-multicbf-4,intro-cbf-ic-1,intro-multicbf-2,intro-multicbf-3}).
Specifically, we add an integrator into the feedback loop of the original system so that the original input becomes an output of the augmented system. This transformation allows us to apply CBF methods directly, ensuring that the input constraints are satisfied.
While this transformation simplifies the control problem and enhances compatibility with the CBF framework, it also introduces mismatched disturbance into the original system (see Section~\ref{section-main}). Inspired by adaptive-CBF (aCBF)~\cite{taylor2020adaptive}, we address this issue by approximating the time-varying disturbance using a weighted combination of basis functions. Update laws are designed to estimate these weights in real-time, thereby eliminating the disturbance’s impact on the control system. Our approach systematically mitigates the challenges posed by complex, time-varying input constraints, ensuring reliable operation under varying conditions. Additionally, it enhances system robustness and performance by employing an aCBF to handle system disturbance effectively.

The rest of the paper is organized as follows. In Section~\ref{section-2}, some notations and preliminaries are introduced. In Section~\ref{section-ps}, a safe constrained input problem is stated for an $n$th order nonlinear system, and a corresponding control algorithm with input constraints is developed based on the CBF technique in Section~\ref{section-main}. The proposed control
the algorithm is verified under simulations in Section~\ref{section-5}, and
finally, conclusions are drawn in Section~\ref{section-6}.
\section{Preliminary}\label{section-2}
In this section, the concepts of CLF and CBF are
reviewed, which are the main tools for our controller design.
\subsection{Notation}
We denote the set of real numbers by $\mathbb{R}$ and non-negative reals by $\mathbb{R}_+$. A continuous function $\alpha:[0,a)\to[0,\infty)$ is class- $\mathcal{K}$ for some $a>0$ if it is strictly increasing on the domain, and $\alpha(0)=0$. It is class-$\mathcal{K}_\infty$ if $\lim_{r\to\infty}\alpha(r)\to\infty$. The Euclidean norm of a vector $x\in\R^n$ is given by $\norm{x}=\sqrt{x^\top x}$. The Lie derivative of a
continuously differentiable function $h\colon \R^n\to \R$ with respect
to a Lipschitz continuous function $f\colon \R^n\to \R$ is $\mathcal{L}_fh(x)=\frac{\partial h}{\partial x}f(x)$. Let $x(t)$ be a real-valued function defined on $t\in[t_0,t_1]$. The supremum of $x(t)$ over the interval $[t_0,t_1]$, denoted by $x_{\text{sup}}(t)$, is defined as $x_{\text{sup}}(t)=\sup_{t\in[t_0,t_1]}x(t)$.  Similarly, the infimum of $x(t)$ over the interval $[t_0,t_1]$, denoted by $x_{\text{inf}}(t)$, is the greatest lower bound of $x(t)$ such that $x_{\text{inf}}(t)=\inf_{t\in[t_0,t_1]}x(t)$.
\subsection{CLF and CBF}
Consider the following control affine system: \cite{ames2019control}
\begin{equation}\label{ctrlaffinesys}
\dot{x}(t)=f(x(t))+g(x(t))u(t),
\end{equation}
where $x(t)=\begin{bmatrix}x_1(t),\ldots,x_n(t)\end{bmatrix}^\top\in\mathbb{R}^n$ is the state vector, $u(t)\in\R^m$ is a constrained control input, and $f:\mathbb{R}^n\to\mathbb{R}^n$ and $g:\mathbb{R}^n\to\mathbb{R}^{n\times m}\setminus\{{0}\}$ are smooth
continuous and locallly Lipschitz functions.
In the rest of the preliminary, we omit time $t$ for $x$ and $u$, provided no confusion arises.
\begin{definition}~\cite{taylor2020adaptive}\label{def clf}
Let $V\colon \R^n\to\R$ be a continuously differentiable, positive definite, and radially unbounded function.
Then $V(x)$ is a control Lyapunov function (CLF) for system~\eqref{ctrlaffinesys} if there exist $\alpha_1$, $\alpha_2$ and $\alpha_3\in\mathcal{K}_{\infty}$ such that:
\begin{align}
\alpha_1(\norm{x})\leq V(x)&\leq \alpha_2(\norm{x}),\\ 
\inf_{u\in\mathbb{R}^m}\left[\mathcal{L}_fV(x)+\mathcal{L}_gV(x)u\right]&\leq-\alpha_3(\norm{x}),
\end{align}
for all $x\in\R^n$, where $\mathcal{L}_fV({x}) = \frac{\partial V}{\partial x}f(x)$ and $\mathcal{L}_gV({x}) = \frac{\partial V}{\partial x}g(x)$ are the Lie derivatives.
\end{definition}
This definition means that there exists a set of stabilizing
controls that renders the origin globally asymptotically stable. This set is defined by
\begin{equation}\label{CLF}
K_{\text{clf}}(x)\!=\!\left\{u\in \mathbb{R}^{m}:\mathcal{L}_{f}V(x)\!+\!\mathcal{L}_{g}V(x)u{\le}\!-\!\alpha_3(\norm{x})\right\},\end{equation}
for all $x\in\R^n$.
Safety can be framed in the context of enforcing
invariance of a particular set of states. 
Consider control system~\eqref{ctrlaffinesys} and suppose
there exists a set $\mathcal{C}\subset \R^n$ defined as the 0-superlevel set of a continuously differentiable function $h:\R^n\to \R$, as follows:
\begin{equation}
\begin{aligned}
    \mathcal{C}=\{x\in\R^n:h(x)\geq0\}.
\end{aligned}
\end{equation}
The set $\mathcal{C}$ is referred to as the safe set, which we assume this set is closed, non-empty, and simply connected.
\begin{definition}\label{def Safety}
The set $\mathcal{C}$ is called forward controlled
invariant with respect to system~\eqref{ctrlaffinesys} if for every $x_0\in\mathcal{C}$,  there exists a control signal $u\colon [t_0,\infty)\to\R^m$ such that $x(t;t_0,x_0)\in\mathcal{C}$ for all $t\geq t_0$, where $x(t;t_0,x_0)$ denotes the solution of~\eqref{ctrlaffinesys} at time
$t$ with initial condition $x_0\in\R^n$ at $t=t_0$.
\end{definition}

\begin{definition}\label{def CBFs}
Let $h:\mathbb{R}^n\to\mathbb{R}$ be a continuously differentiable function that is used to define the safe set $\mathcal{C}\subset\R^n$ in Definition~\ref{def Safety}. Then $h$ is a CBF with (input) relative degree $1$ if the condition
\begin{equation}\label{lflg}
\sup_{{u}\in \R^m}\left[\mathcal{L}_fh({x})+\mathcal{L}_gh({x}){u}+\gamma_h h({x})\right]\geq0,
\end{equation}
is satisfied for all $x\in\R^n$. Given a CBF $h$, the set of all control values that satisfy~\eqref{lflg} is defined as
\begin{equation}\label{def kcbf}
    K_{\text{cbf}}(x) = \{u\in\R^m: \mathcal{L}_fh({x})+\mathcal{L}_gh({x}){u}+\gamma_h h({x})\geq 0\},
\end{equation}
for all $x\in\R^n$.
\end{definition}

  It was proven in~\cite{ames2019control}
that any Lipschitz continuous controller $u$ satisfying $u(t)\in K_{\text{cbf}}(x(t))$ for every $x\in \R^n$ guarantees the forward invariance of $\mathcal{C}$. The
provably safe control law is obtained by solving an online
quadratic program (QP) problem that includes the control
barrier condition as its constraint.
\subsection{Projection operator}
The projection operator of two vectors is defined by~\cite{FAT1,FAT2} as follows:
\begin{equation}
\begin{aligned}[b]
&\mathrm{Proj}({x},{y},l(x)) \\
&=\! \begin{cases}
    {y}\!-\!l({x})\frac{\nabla l({x})\nabla l({x})^\top}{\|\nabla l({x})\|^2}{y},\! & \!\mathrm{if~}l({x})\!>\!0\! \wedge{y}^\top\nabla l({x})\!>\!0, \\
    {y}, & \mathrm{otherwise},
   \end{cases}
\end{aligned}
\end{equation}
for all $x\in\R^n$ and $y\in\R^n$. $l({x})$ is convex function
defined as
\begin{equation}
l({x})=\frac{{x}^\top{x}-\bar{x}^2}{2\eta\bar{x}+\eta^2},
\end{equation}
where $\bar{x}$ and $\eta$ are constants.
\begin{lemma}~\cite{FAT2}\label{lemma x-x*}
Let ${x}^*\in\R^n$ such that $l(x)\leq 0$. Let $x^*=2x$, then
\begin{equation}
-{x}^\top(\mathrm{Proj}({x},{y},l(x))-{y})\leq0.
\end{equation}
Let $x^*=0$, then 
\begin{equation}
{x}^\top(\mathrm{Proj}({x},{y},l(x))-{y})\leq0.
\end{equation}
\end{lemma}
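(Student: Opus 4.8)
The plan is to exploit the two-branch structure of the projection operator and reduce everything to an elementary sign computation. First I would observe that in the \emph{inactive} branch — i.e. whenever the guard $l(x)>0 \wedge y^\top\nabla l(x)>0$ fails — the operator returns $\mathrm{Proj}(x,y,l(x))=y$, so that $\mathrm{Proj}(x,y,l(x))-y=0$ and both claimed inequalities hold with equality irrespective of the sign of $x^\top(\cdot)$. Hence the entire content of the lemma lives in the \emph{active} branch, where
\[
\mathrm{Proj}(x,y,l(x))-y = -\,l(x)\,\frac{\nabla l(x)\,\nabla l(x)^\top}{\norm{\nabla l(x)}^2}\,y .
\]

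Next I would compute the gradient explicitly. Since $l(x)=\tfrac{x^\top x-\bar{x}^2}{2\eta\bar{x}+\eta^2}$, writing $c:=2\eta\bar{x}+\eta^2>0$ gives $\nabla l(x)=\tfrac{2x}{c}$. Substituting into the active-branch expression and taking the inner product with $x$ yields
\[
x^\top\!\big(\mathrm{Proj}(x,y,l(x))-y\big) = -\,l(x)\,\frac{\big(x^\top\nabla l(x)\big)\big(\nabla l(x)^\top y\big)}{\norm{\nabla l(x)}^2},
\]
where $x^\top\nabla l(x)=\tfrac{2\norm{x}^2}{c}\ge 0$. In the active branch the first guard gives $l(x)>0$, the second guard is exactly $\nabla l(x)^\top y>0$, and the denominator is positive; so the right-hand side equals $-l(x)$ (negative) times a nonnegative quantity, hence is $\le 0$. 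This settles the case $x^*=0$.

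For the case $x^*=2x$ the key observation is that the reference-point hypothesis $l(x^*)\le 0$ reads $l(2x)\le 0$, which is incompatible with the active branch. Indeed $l(2x)\le 0 \iff 4\norm{x}^2\le\bar{x}^2$, forcing $\norm{x}^2\le\bar{x}^2/4<\bar{x}^2$ and hence $l(x)<0$; the guard $l(x)>0$ then fails, so we are necessarily in the inactive branch and $\mathrm{Proj}(x,y,l(x))-y=0$, giving $-x^\top(\mathrm{Proj}(x,y,l(x))-y)=0\le 0$ trivially.

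I expect the main (though modest) obstacle to be the sign bookkeeping in the active branch rather than any deep step: one must recognize that the two guard conditions supply precisely the positivity of $l(x)$ and of $\nabla l(x)^\top y$, and that the rank-one term aligns $x$ with $\nabla l(x)=\tfrac{2x}{c}$ so that $x^\top\nabla l(x)\ge 0$ comes for free. If instead one wants the general inequality $(x-x^*)^\top(\mathrm{Proj}(x,y,l(x))-y)\le 0$ for any $x^*$ with $l(x^*)\le 0$ (which specializes to both displayed claims), the only extra ingredient is a Cauchy–Schwarz estimate: in the active branch $l(x)>0$ gives $\norm{x}>\bar{x}\ge\norm{x^*}$, whence $(x-x^*)^\top\nabla l(x)=\tfrac{2}{c}\big(\norm{x}^2-(x^*)^\top x\big)>0$, recovering the same sign pattern.
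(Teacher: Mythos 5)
Your argument is correct, but there is nothing in the paper to measure it against: Lemma~\ref{lemma x-x*} is imported from~\cite{FAT2} by citation only, and no proof is given in the text. Taken as a self-contained verification, your branch split is sound. The inactive branch is trivial, and in the active branch the identity
\begin{equation*}
x^\top\bigl(\mathrm{Proj}(x,y,l(x))-y\bigr)=-\,l(x)\,\frac{\bigl(x^\top\nabla l(x)\bigr)\bigl(\nabla l(x)^\top y\bigr)}{\norm{\nabla l(x)}^{2}},
\end{equation*}
together with $\nabla l(x)=2x/(2\eta\bar{x}+\eta^{2})$ and the two guard conditions, gives the $x^{*}=0$ inequality directly; you do implicitly assume $\bar{x},\eta>0$ so that $2\eta\bar{x}+\eta^{2}>0$, which is worth stating since the paper only calls these ``constants.'' Your observation that the hypothesis $l(2x)\le 0$ forces $l(x)<0$ and hence the inactive branch --- so the $x^{*}=2x$ inequality only ever holds with equality --- is also correct, and it usefully exposes that the lemma's literal statement is degenerate in that case (the hypothesis ``$l(x)\le 0$'' is evidently a typo for ``$l(x^{*})\le 0$,'' which you read correctly). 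The one substantive point to add: what the paper actually invokes downstream, in~\eqref{dididot>}, is the general monotonicity property $(x-x^{*})^\top(\mathrm{Proj}(x,y,l(x))-y)\le 0$ with $x=\hat{w}_{h,j}$ and $x^{*}=w_{h,j}$ the true weight satisfying $\norm{w_{h,j}}\le\bar{w}_{h,j}$, not either of the two displayed specializations. Your closing Cauchy--Schwarz remark, which establishes exactly that general form via $(x-x^{*})^\top\nabla l(x)>0$ in the active branch, is therefore the version that carries the weight in the rest of the paper and should be promoted from an aside to the main body of the proof.
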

\section{Problem statement}\label{section-ps}
In this section, we propose a CBF-based controller design to ensure safety for the system~\eqref{ctrlaffinesys} with input constraints and disturbances. 

Firstly, we consider a nonlinear control-affine dynamical system with the unknown external disturbance
\begin{align}\label{sys1}
    \dot{x}(t)&=f(x(t))+g(x(t))u(t) + d_x(t),
\end{align}
where, $d_x(t) \in\mathcal{D}\subset \mathbb R^n$
is an unknown external disturbance of time $t$ such that $d_x(t)\in \mathcal D$ for all $t$ for a subset $\mathcal D$ of $\R^n$. We denote the initial state and control input of the system at time $t=0$ by $x_0$ and $u_0$, respectively, i.e., $x(0)=x_0$, $u(0)=u_0$. 
We introduce $\kappa(x(t),t)$, a time-varying continuous scalar function that depends on $x$ and $t$, as the input constraint:
\begin{equation}\label{ic}
    \norm{u(t)}\leq \kappa(x(t), t),
\end{equation} for  all $t\geq 0$. 
The magnitude of the control input is
expected to be kept within limits imposed by the actuator’s saturation constraints.  However, current BLF-based methods commonly involve the feasibility conditions on constraint set.
Specifically, when the time-varying saturation includes an unfeasible region will pose difficulty for control safety, as in Example~\ref{exp1}:
\begin{example}\label{exp1}
We consider a simple but representative case of~\eqref{sys1}:
\begin{align}\label{sys example}
        \dot x_1(t) &= x_2(t), \\ 
    \dot x_2(t) &= u(t),
\end{align}
where $u\in\mathcal{U}$ is the control input subjected to a closed control constraint set defined as
\begin{equation}\label{U-example}
    \mathcal{U}\!=\!\{u\colon\!\R_+\!\to\mathbb{R}^m, \underline{k}_l(t)\!\leq\! u(t)\!\leq\! \overline{k}_h(t) \text{ for all } t\!\geq \!0\},
\end{equation}
 where $\underline{k}_l:\R_+\to\R$ and $\overline{k}_h:\R_+\to\R$ are the lowest and highest levels of input constraint such that $\underline{k}_l(t)<\overline{k}_h(t)$ for all $t\geq 0$. We designed a symmetric time-varying constraint as
\begin{align}
\underline{k}_l(t)&=-\sin(t)-1,\\    \overline{k}_h(t)&=\sin(t)+1.
\end{align}
For the system in Example~\ref{exp1}, to implement the input constraints via barrier-function-based methods, we refer to a solid barrier function in~\cite{mousavi2023barrier} as
\begin{equation}\label{bf1}
\mathcal{B}(t) \!=\! b(u;\underline{k}_l,\overline{k}_h)\!=\!\operatorname{log}\left(\frac{\overline{k}_h(t)}{\underline{k}_l(t)}\frac{\underline{k}_l(t)\!-\!u(t)}{\overline{k}_h(t)\!-\!u(t)}\right),
\end{equation}
where $b:\R\to\R$ is the barrier function defined on $(\underline{k}_l, \overline{k}_h)$, as it is obvious to see, if $u$ approaches the boundaries of the permitted range $(\underline{k}_l, \overline{k}_h)$, $\mathcal{B}$ will approach infinity, i.e., $\lim_{u\to \underline{k}_l^+}b(u;\underline{k}_l,\overline{k}_h)=-\infty$, or $\lim_{u\to \overline{k}_h^-}b(u;\underline{k}_l,\overline{k}_h)=+\infty$.
Note that $\inf_{t\geq 0}\{\overline{k}_h(t)\}=0$ and $\sup_{t\geq 0}\{\underline{k}_l(t)\}=0$, one can always find $t_0$ such that 
\begin{equation}\label{t0}
\exists t_0>0:(u_{\sup}(t_0)=0\lor u_{\inf}(t_0)=0),
\end{equation}
and we define the set $T_0$ that $t$ satisfies~\eqref{t0} as 
\begin{equation}
T_0 = \{t\in\mathbb{R}^+:u_{\sup}(t)=0\lor u_{\inf}(t)=0\}.
\end{equation}
Therefore, for $t\geq 0$, $t\notin T_0$,  $\mathcal{B}$ is bounded, then input constraints~\eqref{U-example} are automatically satisfied. However, for $t\in T_0$, then $\overline{k}_h(t) = 0$ or $\underline{k}_l(t) = 0$, and obviously $\mathcal{B}(t)$ diverges. This demonstrates that the barrier function $\mathcal{B}$ cannot enforce forward invariance of the input safety set $\mathcal{U}$ under the given input constraints.
\end{example}
To address such an unsafe condition, and guarantee the input constraint, we define an input constraint safe set 
 for system~\eqref{sys1} based on the CBF
technique. One defines a Lipschitz continuous function $h$ as a barrier function 
\begin{equation}\label{bf}
h(x,u, t) = -u(t)^{\top}u(t) + \kappa^{2}(x, t),
\end{equation}
and to guarantee the input constraint, we let a safe set $\mathcal{C}_u$ for actual control input $u$ as
\begin{equation}\label{safe set}
    \mathcal{C}_u = \{u\colon\!\R_+\!\to\mathbb{R}^m, h(x,u,t)\geq 0\}.
\end{equation}
The FAT is an effective tool for dealing with control systems with time-varying unknown disturbances. For instance, let $d(t)$ be an unknown time-varying function in a control system. One can utilize weighted basis functions to represent $d(t)$ at each time instant, as shown in~\cite{FAT1, FAT2}:
\begin{equation}\label{FAT}
{d}(t)=\sum_{i=1}^\infty{w}_i\psi_{h,j}(t),
\end{equation}
where $w_i$ denotes an unknown constant vector (weight) and $\psi_{h,j}(t)$ is the basis function to be selected. It is a common practice to design an update law that approximates the weights $w_i$ to mitigate the impact of $d(t)$ on the control system. Several candidates for the basis function $\psi_{h,j}(t)$ in\eqref{FAT} can be chosen to approximate the nonlinear functions. In this paper, we select the same form of $\psi_{h,j}$ as in\cite{FAT2}. This preliminary framework sets the stage for the design of the specific update law, which will be detailed in the subsequent sections.

\begin{assumption}\label{assumption bound}
The FAT of $d(t)$ in~\eqref{FAT} 
   satisfies $\norm{w_i}\leq \bar{w}_i$ for all $i$, $\bar{w}_i$ is a known positive constant.
\end{assumption}

Now, we can state the main objective of this paper:
\begin{problem}
Given the system~\eqref{sys1}, design a state feedback controller $u$ such that for
any $u_0\in \C_u$, the closed-loop trajectories of~\eqref{sys1} satisfy $\lim_{t\to\infty}x(t)= 0$ and $u(t)\in\C_u$ for all $t\geq0$. 
\end{problem}

\section{CBF-based Controller Design}\label{section-main}
In this section, we design our CBF-based controller for input-constrained system~\eqref{sys1}. First, we introduce an auxiliary control input to transform the original system into an augmented system, thereby converting the original constrained input problem into a constrained output problem. Next, we propose an aCBF-based method to ensure the safety of the system with input constraints. Finally, we demonstrate that combining this safety controller with a stabilizing nominal control law through a quadratic program achieves the desired behavior, as outlined in our problem statement.
\subsection{Auxiliary transformation}\label{section-main-1}
To provide time-varying bounds on the actual control variable 
$u$, it is natural to place an integrator in the feedback path to augment the system's output as the input of an auxiliary system. Specifically, by introducing an integrator for the control input $u$, the original first-order system in~\eqref{sys1} is transformed into a second-order system, where the time derivative of $u$ is treated as a new auxiliary input $v$. However, a potential disadvantage of this augmentation is the explicit introduction of mismatched disturbances. Consequently, the augmented system can now be described as:
\begin{align}\label{sys1 casestudy}
    \dot x(t) &= f(x(t))+g(x(t))u(t) + d_x(t), \\
    \dot u(t) &= v(t) + d_u(t),
\end{align}
where $d_x(t)\in \mathcal{D}\subset \mathbb R^n$ and $d_u(t)\in \mathcal{D}\subset \mathbb R^n$ are unknown disturbances of time $t$, and $v(t)\in\R^m$ is an auxiliary input defined as:
\begin{equation}\label{v=phi+mu}
    v(t) = \phi(t) + \mu(t) ,
\end{equation}
where $\phi(t)\in\R^m$ is the auxiliary dynamics~\eqref{sys1 casestudy}, and $\mu(t)\in\R^m$ is the safety controller represents the difference between auxiliary input $v$ and nominal control $\phi$. We refer to system~\eqref{sys1 casestudy} as the nominal system when $\mu(t) = 0$ for all $t\geq 0$.
\begin{remark}
The disturbance in the system~\eqref{sys1 casestudy} will always be regarded as sensor faults polluting all the states~\cite{remark-why-d}.
 The pollution caused by such sensor faults cannot be separated from the real signal, thus being mixed into the feedback signal and processed by the algorithm. Thus we address such a scenario that all the states including $u$  are polluted due to sensor faults coinciding in each system state, which is of theoretical and practical significance. 
\end{remark}
The following proposition gives CLF for system~\eqref{sys1 casestudy}. Explicit time dependence of
variable $t$ is omitted in the rest of this paper when it is clear from the context.
\begin{proposition}\label{lemma xiaozang}
    Suppose $\mu(t) = 0$ for all $t\geq 0$ in system~\eqref{sys1 casestudy}, and there exist a continuously differentiable function $V_0: \mathbb{R}^n \to \mathbb{R}_{\geq 0}$ and a legacy feedback controller $u_d(x, \hat{w}_x) \in \mathbb{R}^m$ for system~\eqref{sys1 casestudy}, where $\hat{w}_x$ is an update law designed later. If $u_d(0,0) = 0$ and 
    \begin{equation}\label{V0}
        L_{\tilde{f}_{clf}} V_0(x,\!  \hat{w}_x) \!+\! L_{g} V_0(x, 
\! \hat{w}_x) u_d(x, \hat{w}_x) \! \leq  \!\ \gamma_3({x}),
\end{equation}
        for all $x \in \mathbb{R}^n$, where $\gamma_1$ , $\gamma_2$, $\gamma_3$ are class $\mathcal{K}_{\infty}$ functions, and $\tilde{f}_{clf}$ is defined by
        \begin{equation}
            \tilde{f}_{clf}(x, \hat{w}_x) = f(x) + d_x - \hat{w}_x.
        \end{equation}
    Defining a function $V: \mathbb{R}^n \times \mathbb{R}^m\times \mathbb{R}^n \times \mathbb{R}^m \to \mathbb{R}_{\geq 0}$ as
    \begin{align}
        V(x, u, \hat{w}_x, \hat{w}_u) &\!=\! V_0(x, \hat{w}_x) \!+\! (d_u \!-\! \hat{w}_u)^\top (d_u \!-\! \hat{w}_u) \nonumber \\
        &\quad \!+\! (u \!-\! u_d(x, \hat{w}_x))^\top (u \!-\! u_d(x, \hat{w}_x)),
    \end{align}
    where $\hat{w}_u$ is another update law similar to $\hat{w}_x$. We further suppose that $u_d$ in~\eqref{V0} and $\phi$ in~\eqref{sys1 casestudy} can be designed such that 
    \begin{equation}\label{qp general CLF}
        \dot{V}(x, u, v, \hat{w}_x, \hat{w}_u) \leq -\gamma_3(\|{x}\|) - \gamma_4(\|u - u_d(x, \hat{w}_x)\|),
    \end{equation}
    where $\gamma_4$ is a class $\mathcal{K}_{\infty}$ function.
    Then $V$ in~\eqref{qp general CLF} is a CLF for system~\eqref{sys1 casestudy}. 
\end{proposition}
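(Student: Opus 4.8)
The plan is to check that $V$ satisfies every requirement in Definition~\ref{def clf} for the augmented plant~\eqref{sys1 casestudy}, viewing the pair $(x,u)$ together with the estimator states $(\hat w_x,\hat w_u)$ as the augmented state and the auxiliary input $v=\phi$ (i.e.\ $\mu=0$) as the available control. Since the decay inequality~\eqref{qp general CLF} is itself posed as a design hypothesis, the substance of the argument splits into two parts: first, a purely structural verification that $V$ is an admissible CLF candidate; second, the observation that the assumed decay, attained at $v=\phi$, is exactly the infimum condition demanded by Definition~\ref{def clf}.

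For the structural part, I would argue that $V$ is continuously differentiable because $V_0$ and $u_d$ are continuously differentiable and the two remaining terms are smooth quadratics in their arguments. Positive definiteness and radial unboundedness follow term by term: $V_0$ is positive definite and radially unbounded in $x$ (sandwiched between class-$\mathcal{K}_\infty$ bounds $\gamma_1,\gamma_2$), while $\norm{u-u_d(x,\hat w_x)}^2$ and $\norm{d_u-\hat w_u}^2$ are nonnegative and vanish only when $u=u_d$ and $\hat w_u=d_u$. Using $u_d(0,0)=0$, the function $V$ vanishes precisely at the equilibrium $(x,u,\hat w_x,\hat w_u)=(0,0,0,d_u)$ and grows without bound, so I can assemble class-$\mathcal{K}_\infty$ functions $\alpha_1,\alpha_2$ bounding $V$ from below and above in the augmented-state norm.

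For the decay, I would differentiate $V$ along~\eqref{sys1 casestudy} with $\mu=0$, obtaining $\dot V=\dot V_0+2(d_u-\hat w_u)^\top(\dot d_u-\dot{\hat w}_u)+2(u-u_d)^\top(v+d_u-\dot u_d)$, where $\dot u_d$ is expanded by the chain rule through $x$ and $\hat w_x$. Substituting the hypothesis~\eqref{V0} bounds the $V_0$-contribution in terms of $x$, and the freedom in selecting $\phi$ together with the update laws for $\hat w_x,\hat w_u$ is used to cancel the indefinite cross terms and to force the $-\gamma_4(\norm{u-u_d})$ term. By the hypothesis~\eqref{qp general CLF} such a choice exists, so $\dot V\le-\gamma_3(\norm{x})-\gamma_4(\norm{u-u_d})$, which is nonpositive and strictly negative off the equilibrium in the $(x,u-u_d)$ coordinates; taking the infimum over $v$, attained at $v=\phi$, then yields the CLF inequality.

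The main obstacle I anticipate is reconciling this decay rate, which is negative definite only in the $(x,u-u_d)$ directions, with the Definition~\ref{def clf} requirement of a decrease bounded above by $-\alpha_3$ of the full augmented-state norm that also contains the estimation error $d_u-\hat w_u$. I expect to treat this in the usual adaptive-control fashion: the parameter-error direction is not required to contract strictly, and one instead invokes Assumption~\ref{assumption bound} and the update law to keep $d_u-\hat w_u$ bounded, reading $V$ as a CLF in the relaxed sense standard for adaptive systems. I would also flag that the inequality~\eqref{V0}, as written with $\gamma_3(x)$ on the right, must be understood as the customary strictly-decreasing bound for the chain of estimates to close.
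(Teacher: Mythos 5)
Your proposal is correct and follows essentially the same route as the paper's own (very terse) proof: both simply verify Definition~\ref{def clf} directly, taking the assumed decay inequality~\eqref{qp general CLF}, attained at $v=\phi$ with $\mu=0$, as witnessing the infimum condition. Your added observations --- that~\eqref{V0} must be read with $-\gamma_3(\norm{x})$ on the right, and that the parameter-error term $d_u-\hat w_u$ does not contract so $V$ is a CLF only in the relaxed adaptive-control sense rather than literally satisfying the $-\alpha_3$ bound on the full augmented-state norm --- are accurate and identify exactly the details the paper's proof glosses over.
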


\begin{proof}
    The proof follows directly from the assumptions and the definition of CLF on Definition~\ref{def clf}. Since $V_0$ satisfies the given inequalities and $u_d$ stabilizes the system~\eqref{sys1 casestudy}, the constructed function $V$ inherits these properties, establishing $V$ as a control Lyapunov function for system~\eqref{sys1 casestudy}.
    Furthermore, we have:
    \begin{equation}\label{CLF-INEQUALITY-CONSTRAINT}
        \inf_{v \in \mathbb{R}^n}\! \dot{V}\!(x,\! u,\! v,\! \hat{w}_x,\! \hat{w}_u) \!<\! -\!\gamma_3(\|{x}\|) \!-\! \gamma_4(\|u \!-\! u_d(x,\! \hat{w}_x)\|),
    \end{equation}
    for all $x \neq 0$ and $u \neq u_d$. Hence, $V$ is a CLF for the system.
\end{proof}

    Suppose a valid control barrier function $h(u, \kappa)$ is associated with the input constraint set $\mathcal{C}_u$. Then from Definition~\ref{def CBFs} and Lemma~\ref{lemma xiaozang}, a safe CLF-CBF-QP-based optimization problem for system~\eqref{sys1 casestudy} could be  defined as follows:
    \begin{equation}\label{qp general}
    \begin{aligned}
        &\underset{\mu \in \mathbb{R}^m}{\min} \quad \| \mu \| \\
        &\mathrm{s.t.} \\ 
        &\dot{V}(x, u, v, \hat{w}_x, \hat{w}_u) \!<\! -\!\gamma_3(\|{x}\|)\! -\! \gamma_4(\|u - u_d(x, \hat{w}_x)\|), \\
        &L_f h(u, \kappa) \!+\! L_g h(u, \kappa) v \!\geq \!\gamma_h(h(u, \kappa)) ,
    \end{aligned}
    \end{equation}
    where $\gamma_h$ is a class $\mathcal{K}_{\infty}$ function ensuring the input constraint.

The following two steps will be introduced to derive the inequality constraints in~\eqref{qp general}.
Firstly, we design a nominal controller $\phi$ for the stability of the nominal system, as the CLF inequality constraint shown in~\eqref{qp general}. Then unifying this stability condition with CBF
safety condition~\eqref{safe set}, as the second inequality constraint in~\eqref{qp general}, then solved by QP optimization~\cite{ames2016control}. 
\subsection{CLF inequality constraint}\label{subsection-clf}
 To compensate for the effects of time-varying disturbances
$d_x$ and $d_u$ in system~\eqref{sys1 casestudy}, using FAT approach, the approximation of system~\eqref{sys1 casestudy} can be represented as
\begin{align}\label{sys2 casestudy}
\dot x &= f + g u + \sum_{i=1}^N  w_{x,i}\psi_{x,i} ,\\
\dot u & = \phi + \mu+\sum_{i=1}^N w_{u,i}\psi_{u,i},
\end{align}
where $N$ is the number of basis functions used in the approximation. $w_{x,i}$ and $w_{u,i}$ denotes the unknown constant vector, $\psi_{x,i}(t)$ and $\psi_{u,i}(t)$ are the basis functions to be selected. 

The following theorem shows that we can construct a feedback controller $\phi$ to locally achieve the CLF inequality constraint~\eqref{CLF-INEQUALITY-CONSTRAINT} which
stated in Proposition~\ref{lemma xiaozang} 

\begin{theorem}\label{thm-clf}
Define the nominal control $\phi$ in system~\eqref{sys2 casestudy} as
\begin{equation}\label{nominal phi}
  \begin{multlined}[b]  
    \phi = \frac{1}{g}\Big[-\dot f-\sum_{i=1}^N \big(\hat w_{u,i}\psi_{u,i}
    + \dot{\hat w}_{x,i}\psi_{x,i}+\hat w_{x,i}\dot{\psi}_{x,i}\big)\\
    \quad-\frac{c_u}{\theta_u}\Big(f+gu+\sum_{i=1}^N\hat w_{x,i}\psi_{x,i}
    +c_x\frac{x}{\theta_x}\Big)\\
    \quad-\dot{g}u-\frac{c_x}{\theta_x}(f+g u)\Big].
  \end{multlined}
\end{equation}

where $c_x$, $c_u$ and $\theta_x$, $\theta_u$ are positive constants, $\hat w_{x,i}$ and $\hat w_{u,i}$ are two update laws given by
\begin{align}\label{adaptive and update}
\dot{\hat w}_{x,i} &= \lambda_x^{-1}\psi_{x,i}x,\\
\dot{\hat{w}}_{u,i} &= \lambda_u^{-1}\psi_{u,i}s_u\nonumber\\
& = \lambda_u^{-1}\psi_{u,i}\left(f+g u+\sum_{i=1}^N \hat w_{x,i}\psi_{x,i} + c_x\dfrac{x}{\theta_x}\right).
\end{align}
Then, all closed-loop system signals in~\eqref{sys2 casestudy} are bounded and $\lim_{t\to \infty} x(t) = 0$.
\end{theorem}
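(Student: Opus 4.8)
The plan is to prove this by adaptive backstepping built on a single composite Lyapunov function, treating the two lines of~\eqref{sys2 casestudy} (with $\mu=0$, i.e. the nominal system) as the two stages of a cascade. First I would introduce the error coordinates $s_x := x$ and the ``virtual-control'' error
\begin{equation*}
s_u := f + gu + \sumin \hat w_{x,i}\psi_{x,i} + c_x\tfrac{x}{\theta_x},
\end{equation*}
together with the weight-estimation errors $\tilde w_{x,i} := w_{x,i} - \hat w_{x,i}$ and $\tilde w_{u,i} := w_{u,i} - \hat w_{u,i}$; note that $s_u$ is exactly the signal driving the update law~\eqref{adaptive and update}, which is the reason for that choice. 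Differentiating $s_x$ along~\eqref{sys2 casestudy} and substituting the definition of $s_u$ yields the first-stage relation $\dot s_x = -\tfrac{c_x}{\theta_x}s_x + s_u + \sumin \tilde w_{x,i}\psi_{x,i}$, in which the unknown weights enter only through $\tilde w_{x,i}$.

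Next I would propose the composite Lyapunov candidate
\begin{equation*}
V = \tfrac12 \norm{s_x}^2 + \tfrac12\norm{s_u}^2 + \tfrac{\lambda_x}{2}\sumin \tilde w_{x,i}^\top \tilde w_{x,i} + \tfrac{\lambda_u}{2}\sumin \tilde w_{u,i}^\top \tilde w_{u,i},
\end{equation*}
and differentiate it along the closed loop. Since the true weights are constant, $\dot{\tilde w}_{\cdot,i} = -\dot{\hat w}_{\cdot,i}$, so the update laws~\eqref{adaptive and update} are engineered precisely so that the adaptation terms $\lambda_x\tilde w_{x,i}^\top\dot{\tilde w}_{x,i}$ and $\lambda_u\tilde w_{u,i}^\top\dot{\tilde w}_{u,i}$ cancel the sign-indefinite cross terms $s_x^\top\tilde w_{x,i}\psi_{x,i}$ and $s_u^\top\tilde w_{u,i}\psi_{u,i}$ arising from $s_x^\top\dot s_x$ and $s_u^\top\dot s_u$. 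The crux is then the computation of $\dot s_u$: differentiating its definition produces $\dot f$, $\dot g\,u$, $g\dot u$, the $\hat w_{x,i}$-derivative terms, and $\tfrac{c_x}{\theta_x}\dot x$, and the nominal controller~\eqref{nominal phi} is built so that $g\phi$ cancels every known term while injecting the stabilizing feedback $-\tfrac{c_u}{\theta_u}s_u$, leaving $\dot s_u = -\tfrac{c_u}{\theta_u}s_u + \sumin \tilde w_{u,i}\psi_{u,i}$ plus residual terms in the true weights coming from the disturbance channels.

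After these cancellations I expect $\dot V \le -\tfrac{c_x}{\theta_x}\norm{s_x}^2 - \tfrac{c_u}{\theta_u}\norm{s_u}^2 + s_x^\top s_u + (\text{residual})$, where the coupling $s_x^\top s_u$ is dominated by Young's inequality, $s_x^\top s_u \le \tfrac12\norm{s_x}^2 + \tfrac12\norm{s_u}^2$, and the residual weight terms are bounded using Assumption~\ref{assumption bound}; for gains with $c_x/\theta_x$ and $c_u/\theta_u$ large enough this gives $\dot V \le -c(\norm{s_x}^2 + \norm{s_u}^2) \le 0$ for some $c>0$. From $\dot V\le 0$ the function $V$ is bounded and nonincreasing, which bounds $s_x$, $s_u$ and each $\tilde w_{\cdot,i}$; Assumption~\ref{assumption bound} then bounds $\hat w_{\cdot,i}$, so $x=s_x$, $u$, and all remaining closed-loop signals are bounded. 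Finally, integrating $\dot V \le -c\norm{s_x}^2$ shows $s_x\in L_2$, and since $\dot s_x$ is bounded (every term on its right-hand side is), $s_x$ is uniformly continuous, so Barbalat's lemma gives $s_x(t)=x(t)\to 0$ as $t\to\infty$.

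I would flag the bookkeeping in $\dot s_u$ as the main obstacle: verifying that the many terms of~\eqref{nominal phi} interlock exactly with those produced by $\tfrac{d}{dt}s_u$ and by the disturbance channels, and checking that any residual involving the true weights $w_{x,i}$ (rather than $\tilde w_{x,i}$) is either cancelled or absorbed via Assumption~\ref{assumption bound}. Pinning down the gain conditions that keep $\dot V$ negative definite despite the $s_x^\top s_u$ coupling is the delicate part; once boundedness and $s_x\in L_2$ are secured, the convergence conclusion follows routinely from Barbalat's lemma.
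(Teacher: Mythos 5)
Your proposal is correct in substance and matches the paper's proof in its first half: the paper uses exactly your error coordinates ($s_x=x-x_d$ with $x_d=0$, and $s_u=f+gu+\sumin \hat w_{x,i}\psi_{x,i}+c_x x/\theta_x$, written there as $s_u=f+gu-u_d$), and the controller~\eqref{nominal phi} and update laws~\eqref{adaptive and update} are indeed engineered so that $\dot s_u=\sumin \tilde w_{u,i}\psi_{u,i}-\tfrac{c_u}{\theta_u}s_u$ and the $\tilde w$ cross terms cancel against the adaptation terms (with no residual, since the paper takes the FAT representation of $d_x,d_u$ to be exact). Where you genuinely diverge is in handling the interconnection term $s_x^\top s_u$. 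You fold everything into one composite Lyapunov function and dominate $s_x^\top s_u$ by Young's inequality, which buys a clean, genuinely negative-definite $\dot V$ and hence true convergence $x\to 0$, at the price of the extra gain conditions $c_x/\theta_x>\tfrac12$ and $c_u/\theta_u>\tfrac12$ that the theorem statement does not impose. The paper instead argues sequentially with two separate Lyapunov functions $V_u$ and $V_x$: it first shows $\dot V_u=-c_u s_u^\top s_u/\theta_u\le 0$ unconditionally, concludes $\abs{s_u(t)}\le\abs{s_u(0)}$ and eventual confinement $\abs{s_u}\le\theta_u$, and then chooses $c_x=\theta_u+\delta$ and runs a three-case boundary-layer analysis on $\dot V_x=s_x^\top s_u-c_x s_x^\top s_x/\theta_x$. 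That cascade treatment avoids your Young-type gain restriction but substitutes its own coupling condition ($c_x$ tied to $\theta_u$), and strictly speaking only establishes convergence of $s_x$ into the boundary layer $\abs{s_x}\le\theta_x$ before invoking Barbalat; your single-$V$ route arguably reaches the stated conclusion $\lim_{t\to\infty}x(t)=0$ more directly. Both approaches silently add gain conditions beyond ``positive constants,'' so neither fully closes that gap in the theorem as stated; your flagged concern about the bookkeeping in $\dot s_u$ is resolved exactly as you anticipated, by term-by-term cancellation against~\eqref{nominal phi}.
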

\begin{proof}
To guarantee the stability of the nominal system, in the rest of this section, we assume $\mu(t) = 0$ for all $t>0$ in~\eqref{v=phi+mu}. We further define the sliding surface as
\begin{align}\label{slidesurface}
    s_x &= x - x_d, \\
    s_u &= f + g u - u_d,
\end{align}
where $x_d$ and $u_d$ represent the desired value of state $x$ and $u$ follows
\begin{align}\label{ud}
x_d &= 0,\\ 
u_d &= -\sum_{i=1}^N \hat w_{x,i}\psi_{x,i} - c_x\dfrac{s_x}{\theta_x}.
\end{align}
From~\eqref{slidesurface} we have
\begin{align}\label{slidingsurface dot}
\dot s_x &= (s_u + u_d) + d_x - \dot x_{d},\\
\dot s_u & = \dot f+\dot g u +g(v+d_u)- \dot u_d,
\end{align}
where, $x_{d}$ is the desired state of $x$, and for our control objective, we let $x_{d}(t)=0$ for all $t\geq 0$. We define 
\begin{equation}\label{bar du}
    \bar{d}_u = g d_u + \dot{\hat{d}}_x + \dfrac{c_x}{\theta_x}d_x,
\end{equation} 
and the derivative of $s_u$ in~\eqref{slidingsurface dot} is simplified as
\begin{equation}\label{su dot}
    \dot s_u = \dot g u+g\phi + \bar{d}_u - \ddot x_d - \dfrac{c_x}{\theta_x}(f+g u-\dot x_d).
\end{equation}
Using the function approximation technique given by~\eqref{ud}, \eqref{adaptive and update}, for~\eqref{su dot} and~\eqref{nominal phi}, one obtains 
\begin{equation}
\dot s_u = \sum_{i=1}^N(w_{u,i}-\hat w_{u,i}) \psi_{u,i} - c_u\dfrac{s_u}{\theta_u}.
\end{equation}

Let us design a Lyapunov
function candidate for the second order of the system~\eqref{sys2 casestudy} as
\begin{equation}
V_u \!= \!\dfrac{1}{2}\left(s_u^\top s_u \!+\! \lambda_u\sum_{i=1}^N(w_{u,i}\!-\!\hat w_{u,i})^\top (w_{u,i}\!-\!\hat w_{u,i})\right).
\end{equation}
Take time derivative of $V_u$ along the trajectory of $\dot s_u$ in~\eqref{slidingsurface dot} and we have
\begin{equation}\label{Vu dot 1}
\dot V_u = -c_2\dfrac{s_u^\top s_u}{\theta_u} + \sum_{i=1}^N(w_{u,i}-\hat w_{u,i})^\top (\psi_{u,i}s_u - \lambda_u \dot{\hat w}_{u,i}).
\end{equation}
Using the update law of $\dot{\hat w}_{u,i}$ in~\eqref{adaptive and update}, then~\eqref{Vu dot 1} yields
\begin{equation}\label{Vu dot 2}
\dot V_u  = -c_2\dfrac{s_u^\top s_u}{\theta_u},
\end{equation}
then~\eqref{Vu dot 2} implies
$s_u\in\mathcal{L}_2\cap \mathcal{L}_\infty$ and $w_{u,i}-\hat w_{u,i}\in \mathcal{L}_\infty$. Asymptotic convergence
of $s_u$ can thus be proved by using Barbalat’s lemma.

The results obtained above can be summarized as follows:
The output of system~\eqref{sys2 casestudy} converges to the boundary layer
by using the controller~\eqref{nominal phi} and update law~\eqref{adaptive and update} if sufficient
numbers of basis functions are used and the approximation
errors can be ignored. 

To prove the stability of the error signal $s_x$, let us define
the Lyapunov function candidate
\begin{equation}
V_x \!=\! \dfrac{1}{2}\Big(s_x^\top s_x \!+\! \lambda_x\sum_{i=1}^N(w_{x,i}\!-\!\hat w_{x,i})^\top (w_{x,i}\!-\!\hat w_{x,i})\Big).
\end{equation}
 The time derivative of $V_x$ is computed as
\begin{equation}\label{Vx dot 1}
\dot V_x\! =\! s_x^\top s_u\! - \!c_x\dfrac{s_x^\top s_x}{\theta_x}\! +\! \sum_{i=1}^N(w_{x,i}\!-\!\hat w_{x,i})^\top (\psi_{x,i}s_x \!-\! \lambda_x \dot{\hat w}_{x,i}).
\end{equation}
Using the update law of $\dot{\hat w}_{x,i}$ in~\eqref{adaptive and update}, the equation~\eqref{Vx dot 1} becomes
\begin{equation}\label{Vx dot 2}
\dot V_x = s_x^\top s_u - c_x\dfrac{s_x^\top s_x}{\theta_x}.
\end{equation}
Since $\dot V_u\leq 0$ implies $\abs{s_u(t)}\leq \abs{s_u(0)}$ for all $t>0$ and $\abs{s_u(t+T)}\leq \theta_u$ for some $T>0$, we may design $c_x$ as
\begin{equation}
    c_x = \theta_u + \delta, \quad \delta>0,
\end{equation}
so that \eqref{Vx dot 2} can be further derived to have 
\begin{equation}\label{Vx dot 3}
\begin{aligned}[b]
\dot{V}_{x}& =s_{x}^\top s_u-\left(\theta_n+\delta\right)\frac{s_{x}^2}{\theta_x}  \\
&\leqslant|s_x|\Big(|s_u(0)|-(\theta_{u}+\delta)\frac{|s_{x}|}{\theta_{x}}\Big).
\end{aligned}
\end{equation}
If 
\begin{equation}s_x\notin \left\{s\bigg|\abs{s}\leqslant\frac{|s_u(0)|\theta_x}{\theta_u+\delta}\right\},\end{equation}
then $\dot{V}_{x}\leq 0$, and
hence $s_x$ is bounded. This implies that before $s_u$ converges
to the boundary layer, $s_x$ is bounded. Once $\abs{s_u}\leq\theta_u$, there are three cases to be considered:

\textit{Case 1}: $s_x>\theta_x>0$.

From~\eqref{Vx dot 3}, we have
\begin{equation}\begin{aligned}
\frac{\dot{V}_{x}}{s_{x}}&\leqslant\theta_u-(\theta_u+\delta)\frac{s_{x}}{\theta_x}\leqslant-\delta \frac{s_{x}}{\theta_x},
\end{aligned}\end{equation}
which implies
\begin{equation}
    \dot{V}_{x} \leq -\delta \dfrac{s_x^2}{\theta_x}\leq 0.
\end{equation}

\textit{Case 2}: $s_x<-\theta_x<0$.

From~\eqref{Vx dot 3}, we have
\begin{equation}\begin{aligned}
\frac{\dot{V}_{x}}{s_{x}}&=s_u+(\theta_u+\delta)\frac{\abs{s_{x}}}{\theta_x}\geqslant-\delta \frac{\abs{s_{x}}}{\theta_x},
\end{aligned}\end{equation}
which implies
\begin{equation}
    \dot{V}_{x} \leq -\delta \dfrac{\abs{s_x}^2}{\theta_x}\leq 0.
\end{equation}

\textit{Case 3}: $\abs{s_x}\leq \theta_x$.

In this case, $s_x$ has already converged to the boundary layer, i.e. $s_x$ is bounded by $ \theta_x$.

From the above three cases, we know that once $s_u$ converges inside its boundary layer, $s_x$ is bounded and will also converge to its boundary layer. This gives boundedness of all signals and $s_x\in\mathcal{L}_2\cap \mathcal{L}_\infty$. Furthermore, $(w_{x,i}-\hat w_{x,i})\cap \mathcal{L}_\infty$, then asymptotic convergence of $s_x$ can thus be proved by using Barbalat’s lemma.
\end{proof}

Using nominal controller in~\eqref{nominal phi}, the approximation of $d_x$ and $d_u$ in~\eqref{sys2 casestudy} and auxiliary system in~\eqref{sys1 casestudy}, one yields the CLF inequality constraint in \eqref{qp general} as follows:
\begin{equation}\label{inequality-1}
  \begin{multlined}[b][.7\linewidth]  
    \mu^\top \Bigl(f+gu-\sum_{i=1}^N \hat w_{x,i}\psi_{x,i}
      -  c_x\dfrac{x}{\theta_x}\Bigr)\\
    -\dfrac{c_u}{\theta_u}\Bigl\lVert f+gu-\sum_{i=1}^N \hat w_{x,i}\psi_{x,i}
      - c_x\dfrac{x}{\theta_x}\Bigr\rVert^2 \leq 0.
  \end{multlined}
\end{equation}
\subsection{A safe controller design}
To compensate for the effects of unknown disturbance $d_u$ in system~\eqref{sys1 casestudy}, similar to the FAT approach in subsection~\ref{subsection-clf}, the auxiliary term in \eqref{sys1 casestudy} can be represented as
\begin{equation}\label{control system}
    \dot u = v + \sum_{j=1}^M{w}_{h, j}\psi_{h,j}(t) ,
\end{equation}
where $M$ is the number of basis functions used in the approximation, $w_{h, j}$ denotes an unknown constant vector, $\psi_{h,j}$
is the basis function to be selected.
\begin{assumption}\label{assumption bound2}
    The input constraint boundary $\dot \kappa$ is bounded such that  $\dot \kappa\leq \Pi_{\kappa}$, where  $\Pi_{\kappa}$ is a positive constants.
\end{assumption}
\begin{theorem}\label{thm-cbf}
By constructing the update laws $\hat w_{h, j}$ for the parameter estimation as
\begin{equation}\label{hat di dot}
\dot{\hat{w}}_{h, j}=\mathrm{Proj}\left(\hat w_{h, j},-\frac1{2Q_{j}}\left(\frac{\partial h}{\partial u}\right)\psi_{h,j}-\frac\varrho2\hat w_{h, j},l_{di}\right),
\end{equation}
where 
\begin{equation}
l_{w_{h, j}}(\hat w_{h, j})=\frac{\hat w_{h, j}^\top\hat w_{h, j}-\bar{w}_{h,j}^2}{2\nu_i\bar{w}_{h,j}+\nu_i^2},
\end{equation}
$\nu_i$ is a small constant, and 
\begin{equation}\label{Q}
Q_{j}\leq\frac{h(v(0))}{2N(\|\hat w_{h, j}(0)\|+\bar{w}_{h, j})^2},
\end{equation}
any Lipschitz continuous controller $v\in K_{\text{cbf}}(u, \hat w_{h, j})$ where 
\begin{equation}\begin{aligned}[b]\label{Kcbf}
K_{\text{cbf}}(u, \hat w_{h, j}) 
= & \left\{v\in\mathbb{R}^m\mid\left(\frac{\partial h}{\partial u}\right)^\top\sum_{i=1}^N \hat w_{h, j}\psi_{h,j}-\zeta\right. \\
&\left.+\frac{\varrho}{2}\bigg(h-\sum_{i=1}^NQ_{j}{\bar{{w}}_{h, j}}^2\bigg)\geq0\right\},
\end{aligned}\end{equation}
with $\zeta=\left\lVert\frac{\partial h}{\partial\kappa}\right\rVert \Pi_\kappa$, 
will guarantee the safety of $\C_u$ in regard to system~\eqref{control system}.
\end{theorem}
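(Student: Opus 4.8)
The plan is to establish forward invariance of $\mathcal{C}_u$ by introducing an augmented barrier function that absorbs the parameter-estimation error and then showing it obeys a standard CBF differential inequality. Writing the estimation error as $\tilde w_{h,j} = w_{h,j} - \hat w_{h,j}$, I would define
\begin{equation}
H = h - \sum_{j=1}^M Q_j\,\tilde w_{h,j}^\top \tilde w_{h,j}.
\end{equation}
First I would check initial feasibility. Since $\|\tilde w_{h,j}(0)\| \leq \|\hat w_{h,j}(0)\| + \bar w_{h,j}$ by Assumption~\ref{assumption bound}, the choice of $Q_j$ in~\eqref{Q} forces $\sum_j Q_j\|\tilde w_{h,j}(0)\|^2 \leq \tfrac12 h(0)$, so that $H(0) \geq \tfrac12 h(0) \geq 0$ whenever $u_0 \in \mathcal{C}_u$. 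The whole argument then reduces to propagating $H \geq 0$ forward in time.

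Next I would differentiate $H$ along the closed-loop trajectories of~\eqref{control system}. Using $\dot u = v + \sum_j w_{h,j}\psi_{h,j}$ and the chain rule,
\begin{equation}
\dot h = \frac{\partial h}{\partial u}\Big(v + \sum_{j=1}^M \hat w_{h,j}\psi_{h,j}\Big) + \frac{\partial h}{\partial u}\sum_{j=1}^M \tilde w_{h,j}\psi_{h,j} + \frac{\partial h}{\partial \kappa}\dot\kappa,
\end{equation}
while $\dot H = \dot h + 2\sum_j Q_j\,\tilde w_{h,j}^\top \dot{\hat w}_{h,j}$ because $w_{h,j}$ is constant. The time-varying-boundary term is controlled by Assumption~\ref{assumption bound2}, which gives $\big|\tfrac{\partial h}{\partial\kappa}\dot\kappa\big| \leq \|\tfrac{\partial h}{\partial\kappa}\|\,\Pi_\kappa = \zeta$ regardless of sign.

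The crux is to annihilate the unknown term $\tfrac{\partial h}{\partial u}\sum_j \tilde w_{h,j}\psi_{h,j}$. Substituting the update law~\eqref{hat di dot}, the component $-\tfrac{1}{2Q_j}\big(\tfrac{\partial h}{\partial u}\big)\psi_{h,j}$ inside the projection produces exactly $-\sum_j \tilde w_{h,j}^\top\big(\tfrac{\partial h}{\partial u}\big)\psi_{h,j}$ in $2\sum_j Q_j\tilde w_{h,j}^\top\dot{\hat w}_{h,j}$, cancelling the offending quantity, while the $-\tfrac{\varrho}{2}\hat w_{h,j}$ component contributes $-\varrho\sum_j Q_j\tilde w_{h,j}^\top\hat w_{h,j}$. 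To treat the projection rigorously I would invoke Lemma~\ref{lemma x-x*}, whose sign property guarantees that replacing the nominal update direction by its projection only adds a term of favorable sign and never weakens the inequality. Combining the cancellation with the membership $v \in K_{\mathrm{cbf}}(u,\hat w_{h,j})$ to lower-bound the remaining known terms via~\eqref{Kcbf}, and absorbing $\tfrac{\partial h}{\partial\kappa}\dot\kappa \geq -\zeta$, I would then complete the square: writing $\hat w_{h,j} = w_{h,j} - \tilde w_{h,j}$ and using $\bar w_{h,j}^2 \geq \|w_{h,j}\|^2$ from Assumption~\ref{assumption bound},
\begin{equation}
\bar w_{h,j}^2 - 2\tilde w_{h,j}^\top\hat w_{h,j} - \|\tilde w_{h,j}\|^2 = \|\hat w_{h,j}\|^2 + \big(\bar w_{h,j}^2 - \|w_{h,j}\|^2\big) \geq 0,
\end{equation}
which is precisely what reduces the bound to the clean form $\dot H \geq -\tfrac{\varrho}{2}H$.

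Finally I would close the argument with the comparison lemma: $\dot H \geq -\tfrac{\varrho}{2}H$ together with $H(0) \geq 0$ yields $H(t) \geq H(0)\,e^{-\varrho t/2} \geq 0$ for all $t \geq 0$. Since $Q_j \geq 0$, this forces $h(x,u,t) \geq \sum_j Q_j\|\tilde w_{h,j}\|^2 \geq 0$, i.e.\ $u(t) \in \mathcal{C}_u$ for all $t$, establishing safety. I expect the main obstacle to be the sign bookkeeping around the projection operator, namely matching the two cases of Lemma~\ref{lemma x-x*} ($x^*=2x$ versus $x^*=0$) to the error variable $\tilde w_{h,j}$ so that the projection contribution is provably nonnegative, together with confirming that the $\zeta$ bound from Assumption~\ref{assumption bound2} correctly absorbs the indefinite $\tfrac{\partial h}{\partial\kappa}\dot\kappa$ term in both sign directions.
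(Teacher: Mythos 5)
Your proposal is correct and follows essentially the same route as the paper's own proof: the augmented barrier $H$ is exactly the paper's $\bar h$, the initial-feasibility check via the choice of $Q_j$, the cancellation of the $\tilde w_{h,j}$ term through the projection-based update law (Lemma~\ref{lemma x-x*}), the bound $2\tilde w_{h,j}^\top\hat w_{h,j}\leq \bar w_{h,j}^2-\|\tilde w_{h,j}\|^2$, and the final comparison-lemma step $\dot{\bar h}\geq-\tfrac{\varrho}{2}\bar h$ all match. No substantive differences to report.
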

\begin{proof}
Define $\bar{h}$ as
\begin{equation}\label{h bar}
\bar{h}=h-\sum_{j=1}^MQ_{j}\tilde{w}_{h, j}^\top\tilde{w}_{h, j},
\end{equation}
where $\tilde{w}_{h, j}=w_{h, j}-\hat{w}_{h, j}$. To prove Theorem~\ref{thm-cbf}, one needs to show that $\bar{h}(t)\geq 0$ for all $t>0$, such that $h(t)\geq 0$ for all $t>0$ as required by~\eqref{safe set}. This property holds if $\dot{\bar{h}}$ can be expressed in the form of (or larger than) $-\lambda \dot{\bar{h}}$ where $\lambda>0$ with $\bar{h}(0)\geq 0$.

A reconstruction of $\dot{\bar{h}}$ to the form of $-\lambda \dot{\bar{h}}$ is demonstrated
as follows. With Assumption~\ref{assumption bound2},  $\dot{\bar{h}}$ is calculated as
\begin{equation}\label{h bar dot}
  \begin{aligned}[b]  
    \dot{\bar{h}} &
      =\Bigl(\frac{\partial h}{\partial u}\Bigr)^{\!\top}\dot u
      +\Bigl(\frac{\partial h}{\partial\kappa}\Bigr)^{\!\top}\dot{\kappa}
      -2\sum_{j=1}^M Q_j\,\tilde w_{h,j}^{\!\top}\dot{\tilde w}_{h,j}
    \\
    &\begin{multlined}[.85\linewidth]
       =\Bigl(\frac{\partial h}{\partial u}\Bigr)^{\!\top}
         \Bigl(v+\sum_{j=1}^M w_{h,j}\psi_{h,j}\Bigr)
       +\Bigl(\frac{\partial h}{\partial \kappa}\Bigr)^{\!\top}\dot{\kappa}
       \\+2\sum_{j=1}^M Q_j\,\tilde w_{h,j}^{\!\top}\dot{\hat w}_{h,j}
     \end{multlined}
    \\
    &\begin{multlined}[b][.85\linewidth]
       \ge\Bigl(\frac{\partial h}{\partial u}\Bigr)^{\!\top}
         \sum_{j=1}^M w_{h,j}\psi_{h,j}
       +\Bigl(\frac{\partial h}{\partial u}\Bigr)^{\!\top}v
       -\zeta
       \\+2\sum_{j=1}^M Q_j\,\tilde w_{h,j}^{\!\top}\dot{\hat w}_{h,j}
     \end{multlined}
  \end{aligned}
\end{equation}
As update law $\dot{\hat{w}}_{h, j}$ in~\eqref{h bar dot} is defined as~\eqref{hat di dot}, from Lemma~\ref{lemma x-x*}, one can see 
\begin{equation}\begin{aligned}[b]\label{dididot>}
&\tilde{w}_{h, j}^\top\dot{\hat{w}}_{h, j}\\
&\begin{multlined}[b][.85\linewidth]
     =(w_{h, j}-\hat{w}_{h, j})^\top   \\
\mathrm{Proj}\left(\hat{w}_{h, j},-\frac1{2Q_{j}}\left(\frac{\partial h}{\partial u}\right)\psi_{h,j}-\frac\varrho2\hat{w}_{h, j},l_{w_{h,j}}\right) 
\end{multlined}\\
&\geq-(w_{h,j}-\hat{w}_{h, j})^\top\left(\frac1{2Q_{j}}\bigg(\frac{\partial h}{\partial u}\bigg)\psi_{h,j}+\frac\varrho2\hat{w}_{h, j}\right).
\end{aligned}\end{equation}
Substituting~\eqref{dididot>} into~\eqref{h bar dot} yields
\begin{equation}\begin{aligned}[b]\label{dot bar h >}
&\begin{multlined}[b][.85\linewidth]
\dot{\bar{h}}\geq\left(\frac{\partial h}{\partial u}\right)^\top \sum_{j=1}^Mw_{h,j}\psi_{h,j} + \left(\frac{\partial h}{\partial u}\right)^\top v
-\zeta\\
-\sum_{j=1}^M{\bar{w}}_{h,j}^\top\left(\left(\frac{\partial h}{\partial u}\right)\psi_{h,j}+\varrho Q_{j}\hat{w}_{h,j}\right)\end{multlined}\\
&\begin{multlined}[b][.85\linewidth]\geq\left(\frac{\partial h}{\partial u}\right)^\top\left(\sum_{j=1}^M{\hat{w}}_{h,j}\psi_{h,j} + v\right)\\-\varrho\sum_{j=1}^{M}Q_{j}\tilde{w}_{h, j}^{\top}\hat{w}_{h,j}-\zeta. \end{multlined}\end{aligned}\end{equation}
Note that
\begin{equation}\label{didi<}
\tilde{w}_{h,j}^\top\hat{w}_{h,j}\!\leq\!\frac{w_{h,j}^\top w_{h,j}\!-\!\tilde{w}_{h,j}^\top\tilde{w}_{h, j}}2\!\leq\!\frac{\bar{w}_{h,j}^2\!-\!\tilde{w}_{h, j}^\top\tilde{w}_{h, j}}2.\end{equation}
The substitution of \eqref{didi<} into \eqref{dot bar h >} gives
\begin{equation}\begin{aligned}[b]\label{h bar dot > 2}
\dot{\bar{h}} &\geq \left(\frac{\partial h}{\partial u}\right)^\top v
+\frac{\varrho}{2}\left(\sum_{j=1}^MQ_{j}(\bar{w}_{h,j}^2-\tilde{w}_{h, j}^\top\tilde{w}_{h, j})\right)-\zeta \\
&\qquad+ \left(\frac{\partial h}{\partial u}\right)^\top\sum_{j=1}^M\hat{w}_{h,j}\psi_{h,j} \\
&=\Gamma+\frac12\varrho\bigg(\sum_{j=1}^MQ_{j}\tilde{w}_{h, j}^\top\tilde{w}_{h, j}\bigg), 
\end{aligned}\end{equation}
where
\begin{equation}\begin{aligned}\label{Gamma}
\Gamma \!=\!\left(\frac{\partial h}{\partial u}\right)^\top\!\!\left(\!v\!+\!\sum_{j=1}^M\hat{w}_{h,j}\psi_{h,j}\!\right)  \!\!-\!\frac\varrho2\bigg(\sum_{j=1}^MQ_{j}\bar{w}_{h,j}^2\bigg)\!-\!\zeta.
\end{aligned}\end{equation}
If $v$ in~\eqref{Gamma} is selected from ~\eqref{Kcbf}, the following condition is
satisfied $\Gamma \geq -\frac{\varrho}{2}h$, 
and thus, in virtue of~\eqref{h bar}, \eqref{h bar dot > 2} can be reexpressed as
\begin{equation}\dot{\bar{h}}\geq-\frac\varrho2\left(h-\sum_{j=1}^MQ_{j}\tilde{w}_{h,j}^\top\tilde{w}_{h, j}\right)=-\frac\varrho2\bar{h}.\end{equation}
In addition, as $\hat w_{h,j}$ are bounded by $\bar w_{h,j}$, $\bar h(0)$ satisfies
\begin{equation}\begin{aligned}[b]
\bar{h}(0)&\!=\!h(0)\!-\!\sum_{j=1}^M\!Q_{j}\left(w_{h,j}\!-\!\hat{w}_{h,j}(0)\right)^\top\!\left(w_{h,j}\!-\!\hat{w}_{h,j}(0)\right) \\
&\geq h(0)-\sum_{j=1}^MQ_{j}\Big(\bar{w}_{h,j}+\|\hat{w}_{h,j}(0)\|\Big)^2.
\end{aligned}\end{equation}
The selection of parameters $Q_{j}$ as~\eqref{Q} yields $\bar{h}(0)\geq 0$. According to the comparison lemma, we know $\bar{h}(t)\geq 0$ for all $t>0$, such that $h(t)\geq 0$ for all $t>0$ as desired.
\end{proof}
Finally, by using~\eqref{inequality-1} and \eqref{Kcbf} in Theorem~\ref{thm-cbf}, a safe controller is obtained by solving the following CLF-CBF-QP problem


\begin{IEEEeqnarray}{rCl}\label{QP}
    &&\min_{\mu}\|\mu\|^2
    \nonumber\\
&&\mathrm{s.t.~}
\nonumber\\
&&
\mu^\top\left(f+gu-\sum_{i=1}^N \hat w_{x,i}\psi_{x,i} -  c_x\dfrac{x}{\theta_x}\right)
\nonumber\\
&&\qquad\qquad-\dfrac{c_u}{\theta_u}\left\lVert f\!+\!gu\!-\!\sum_{i=1}^N \hat w_{x,i}\psi_{x,i} \!-\! c_x\dfrac{x}{\theta_x}\right\rVert^2\leq 0\label{equ qp1}, 
\\
&&
\!-\!2u^\top\Bigg(\sum_{j=1}^M\hat{w}_{h,j}\psi_{h,j} \!-\!c_x\dfrac{u}{\theta_x} \!-\! \dfrac{c_u}{\theta_u}\Big(u\!+\!\sum_{i=1}^N \hat w_{x,i}\psi_{x,i} 
\nonumber\\ 
&&\!+\!c_x\dfrac{x}{\theta_x}\Big)\!-\!\sum_{i=1}^N \hat w_{u,i}\psi_{u,i} \!-\! \mu\Bigg)
\!-\!2\Pi_\kappa\norm{\kappa}\nonumber\\
&&
\!+\! \frac{\varrho}{2}\left( h\!-\!\sum_{j=1}^MQ_{j}{\bar{{w}}_{h,j}}^2\right)\!\geq \!0\label{equ qp2}. 
\end{IEEEeqnarray}
\section{Case study}\label{section-5}
\begin{figure*}[!t]
  \centering
  \subfloat[Trajectory of $x$]{%
    \includegraphics[width=0.32\textwidth]{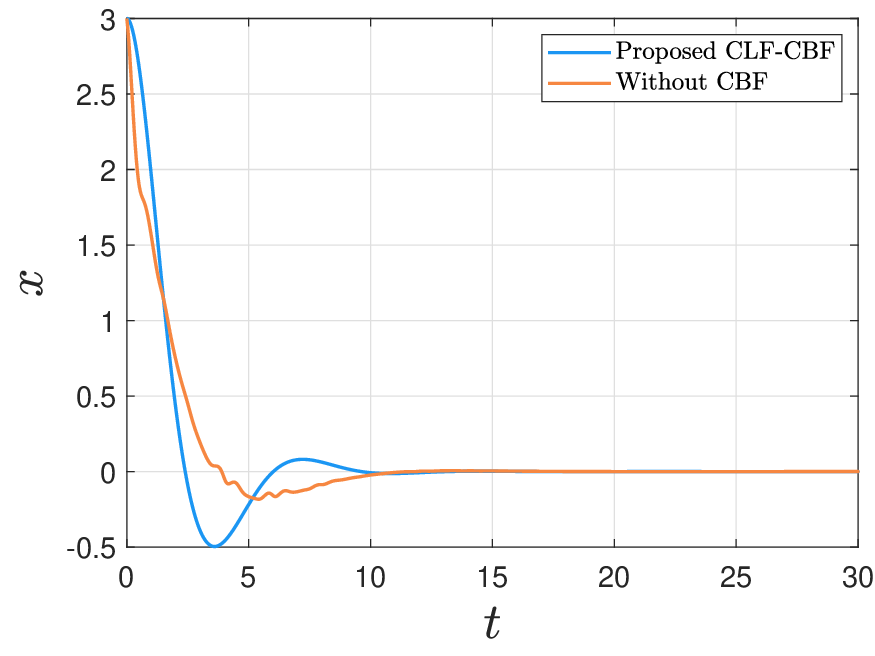}%
    \label{fig:subfig1}%
  }\hfill
  \subfloat[Control input $u$]{%
    \includegraphics[width=0.31\textwidth]{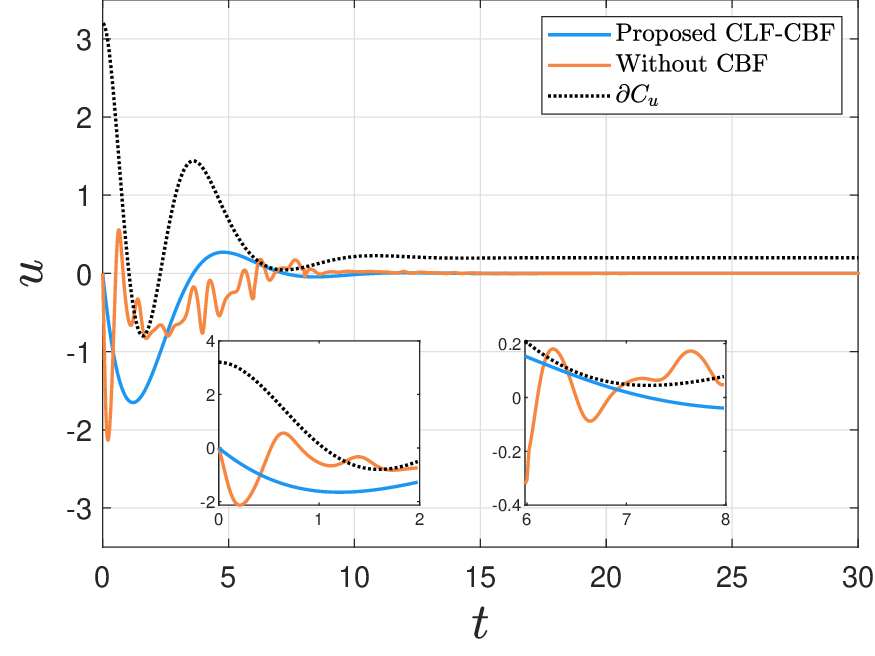}%
    \label{fig:subfig2}%
  }\hfill
  \subfloat[Barrier function $h$]{%
    \includegraphics[width=0.32\textwidth]{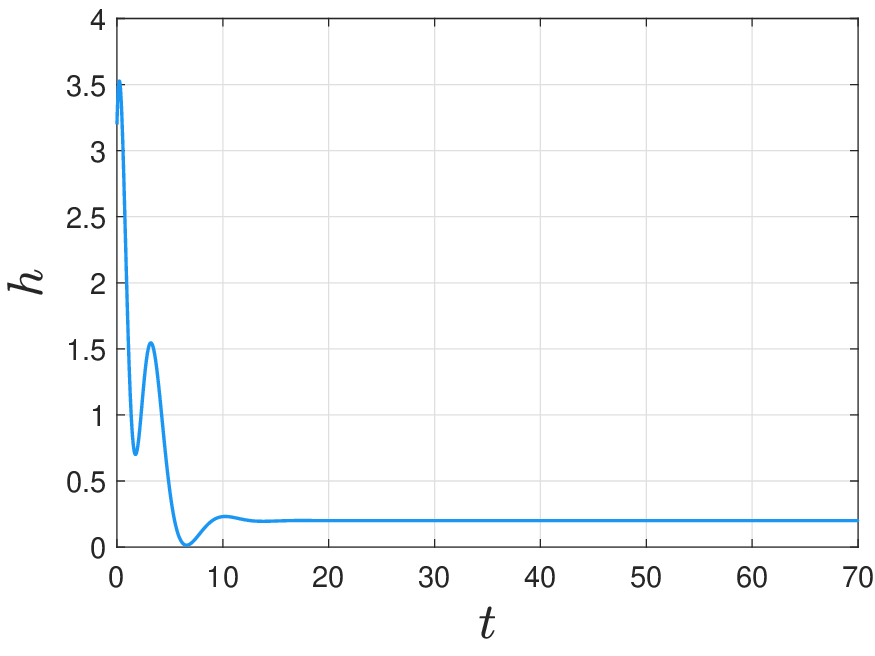}%
    \label{fig:subfig3}%
  }
  \vspace{-1.5mm}
  \caption{Case 1 simulation results. Subfigures (a) and (b) compare the system state trajectory $x$ and control input $u$ obtained using the proposed CLF-CBF (blue) and the nominal control $v=-x-x^2\text{sgn}(u)-u$ (magenta), respectively. (c) illustrates the barrier function $h$.}
  \label{fig:case1}
\end{figure*}
We first apply the proposed CBF-based controller to system~\eqref{sys example}.
We define the barrier function as $h(x,u) = \kappa - u$ for system~\eqref{sys example}, where $\kappa(x) = (x-1)^2-0.8$. Using system transformation in Section~\ref{section-main-1}, the auxiliary control input $v$ for system~\eqref{sys example} follows $\dot u = v$.
Our goal is to design the auxiliary control input $v$, such that $\lim_{t\to\infty} x(t)\to 0$ with $u\in\mathcal{C}_u$ for all $t\geq 0$ in system~\eqref{sys example}. To achieve this objective, one can design a nominal controller $\phi$ as $\phi = -x - x^2 \text{sgn}(u) -u$.
We set the initial conditions as $x(0)=3$ and $u(0)=0$, and set the constraint as $\kappa = (x-1)^2-0.8$ with a enough large constant $\Pi_{\kappa} = 15$ to satisfied  $\norm{\dot \kappa}\leq \Pi_{\kappa}$. 
The proposed controller (blue) is compared to a normal CLF-CBF controller (magenta), which proposed in~\cite{ames2019control} and not consider the estimation for external disturbance. 
The corresponding simulation results are shown in Figure~\ref{fig:subfig1},~\ref{fig:subfig2} and~\ref{fig:subfig3}. We can see the system~\eqref{sys example} reaches the input constraint around $t=1.5, 6.0$, and $7.0$ seconds, where nominal control input leaves the safe set. The proposed method remains feasible and safe for the entire duration, by applying brakes early,
around $t = 6.5$ seconds, instead of $t = 6.0$ seconds. 
\begin{figure*}[!t]
  \centering
  \subfloat[Trajectory of $x$]{%
    \includegraphics[width=0.32\textwidth]{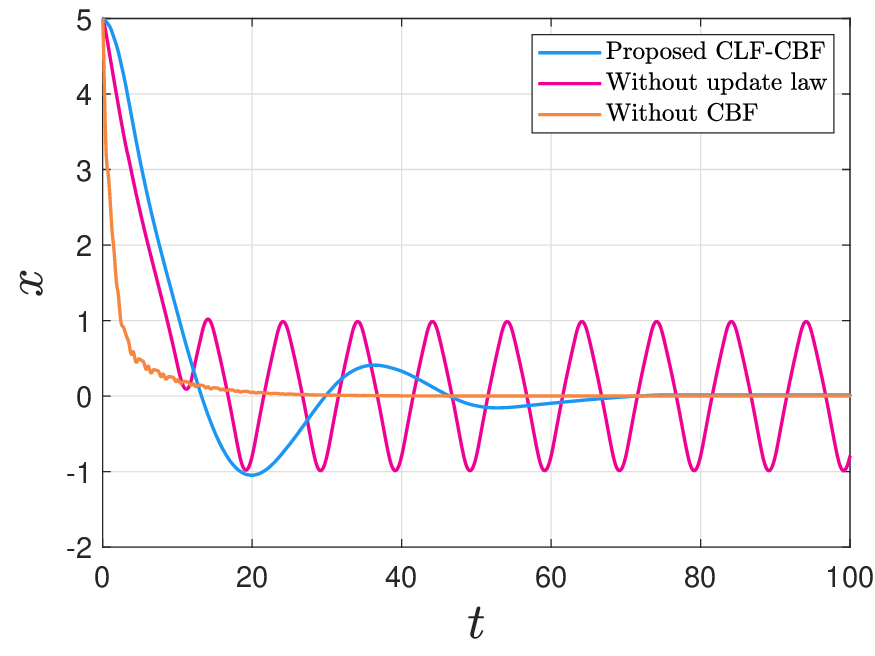}%
    \label{fig:subfig4}%
  }\hfill
  \subfloat[Control input $u$]{%
    \includegraphics[width=0.32\textwidth]{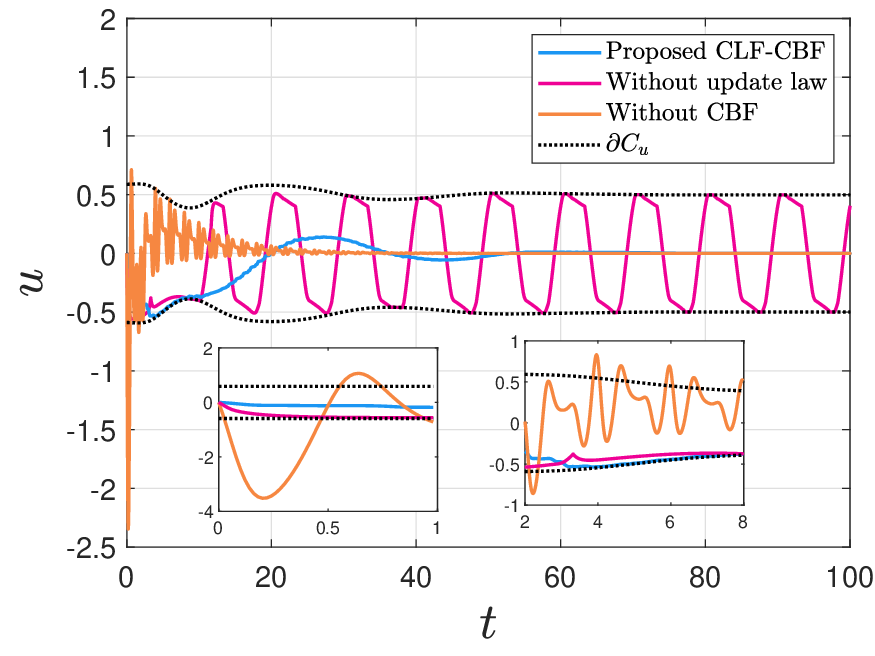}%
    \label{fig:subfig5}%
  }\hfill
  \subfloat[Barrier function $h$]{%
    \includegraphics[width=0.32\textwidth]{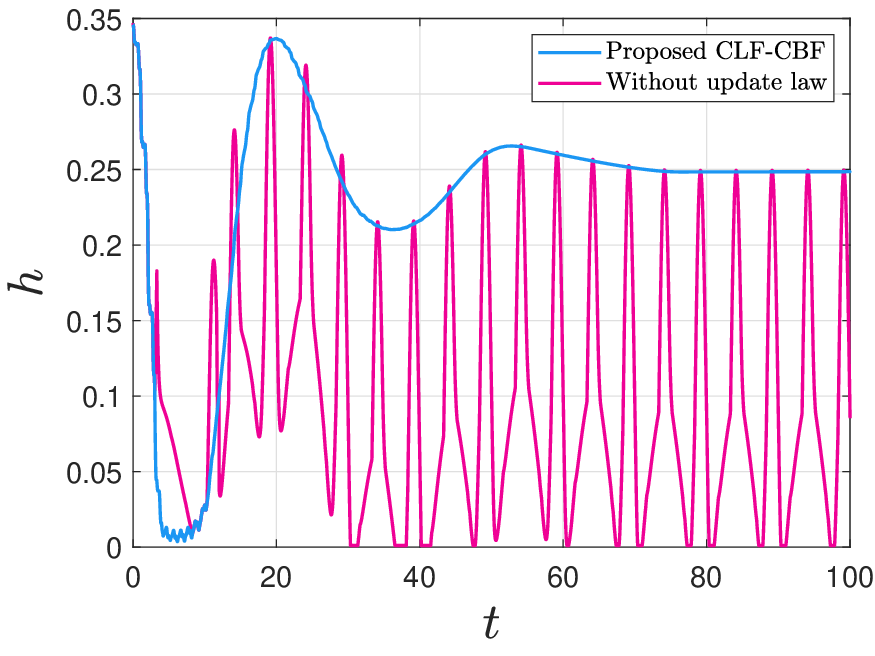}%
    \label{fig:subfig6}%
  }
  \vspace{-1.5mm}
  \caption{Case 2 simulation results. Subfigures (a), (b), and (c) compare the system state trajectory $x$, control input $u$, and barrier function $h$, respectively, obtained using the proposed CLF-CBF controller (blue), a CLF-CBF controller without the update law, and a CLF controller with the update law.}
  \label{fig:case2}
\end{figure*}
In the second numerical study, we consider a planar single-integrator system with external disturbance by letting $f(x) = 0$, $g(x)=1$ in~\eqref{sys1 casestudy}. We set the time-varying disturbances as
\begin{equation}d_x(t)=d_u(t)=\begin{cases}\frac{d_{\max}}{2}t,\quad0\leq t<\frac{T}{6},\\d_{\max}t,\quad\frac{T}{6}\leq t<\frac{T}{3},\\\frac{d_{\max}}{2}(\frac{T}{2}-t),\quad\frac{T}{3}\leq t<\frac{2T}{3},\\-d_{\max},\quad\frac{2T}{3}\leq t<\frac{5T}{6},\\\frac{d_{\max}}{2}(t-T),\quad\frac{5T}{6}\leq t\leq T,\end{cases}\end{equation}
and the maximum amplitude of the disturbance $d_{\max} = 1$.
 We set the system initial conditions as $x(0)=5, u(0)=0$. The positive constants in the simulation are selected as $c_x = c_u =0.21$, $\theta_x = \theta_u = 0.1$, $\varrho = 0.95$ and $\Pi_\kappa = 15$. Other parameters in this simulation are selected as $\nu = 0.1$, $l = 5$, $\bar{d}_i = 20$, $T = 120$s and $\lambda_x = \lambda_u = 1$.

We intend to control the system to an equilibrium point $\lim_{t\to\infty}x(t)=0$ with a state and time-related barrier function which follows the definition in~\eqref{ic} and~\eqref{bf}, and we further define
 $\kappa  = \left(-0.1\sin(x)-1/(t+10)+0.25\right)^{\frac{1}{2}}$.
 Then our proposed controller (blue) for system~\eqref{sys2 casestudy} is adopted by solving the QP problem~\eqref{equ qp1}, \eqref{equ qp2}
where the weights $\hat{w}_{h,j}$, $\hat{w}_{x,i}$ and $\hat{w}_{u,i}$ are updated by~\eqref{hat di dot} and~\eqref{adaptive and update}. 
We compared the proposed controller with the normal CLF-CBF controller (magenta) proposed
in~\cite{ames2019control}, and only using the nominal controller in~\eqref{nominal phi} without using CBF (orange). 
The simulation results are shown in Figure~\ref{fig:subfig4},~\ref{fig:subfig5} and~\ref{fig:subfig6}. The system approaches the input constraint from $t=2.0$ to $8.0$ seconds, where nominal control input leaves the safe set. In contrast, the proposed CLF-CBF method ensures the safety of the input-constrained system for all $t\geq 0$.
\section{Conclusion}\label{section-6}
The novel input-constrained CBF scheme in this paper effectively addresses the challenges of controlling full-state and input-constrained nonlinear systems. By employing an input-to-output auxiliary transformation, the original input constraints are converted into an output CBF design, thus bypassing the limitations imposed by the constraints. 
Simulation results validate the algorithm's effectiveness. 
Future research could explore the ``anti-windup" problem associated with the proposed CBF-based input constraints~\cite{tarbouriech2009anti, tarbouriech2011stability,hippe}, and refine the algorithm for specific applications~\cite{mir2} in real-world scenarios.
\bibliographystyle{ieeetr}

\end{document}